%% file: main_ver3_after_reviewed.tex
\documentclass[conference]{IEEEtran}
\IEEEoverridecommandlockouts

\input{packages}
\input{acronyms}
\input{definitions_math}

\def\BibTeX{{\rm B\kern-.05em{\sc i\kern-.025em b}\kern-.08em
    T\kern-.1667em\lower.7ex\hbox{E}\kern-.125emX}}
\begin{document}

\title{Target Induced Angle Grid Regularized Estimation for Ghost Identification in Automotive Radar
}

\author{\author{Junho Kweon$^*$ and Vishal Monga$^*$\\
$^*$\textit{Electrical Engineering Department, The Pennsylvania State University}, University Park, PA, USA}
}

\maketitle

\begin{abstract}
This study presents a novel algorithm for identifying ghost targets in automotive radar by estimating complex valued signal strength across a two-dimensional angle grid defined by direction-of-arrival (DOA) and direction-of-departure (DOD). In real-world driving environments, radar signals often undergo multipath propagation due to reflections from surfaces such as guardrails. These indirect paths can produce ghost targets—false detections that appear at incorrect locations—posing challenges to autonomous navigation. A recent method, the Multi-Path Iterative Adaptive Approach (MP-IAA), addresses this by jointly estimating the DOA/DOD angle grid, identifying mismatches as indicators of ghost targets. However, its effectiveness declines in low signal-to-noise ratio (SNR) settings. To enhance robustness, we introduce a physics-inspired regularizer that captures structural patterns inherent to multipath propagation. This regularizer is incorporated into the estimation cost, forming a new loss function that guides our proposed algorithm, TIGRE (Target-Induced angle-Grid Regularized Estimation). TIGRE iteratively minimizes this regularized loss and we show that our proposed regularizer asymptotically enforces $\ell_0$ sparsity on the DOA/DOD grid.  Numerical experiments demonstrate that the proposed method substantially enhances the quality of angle-grid estimation across various multipath scenarios, particularly in low SNR environments, providing a more reliable basis for subsequent ghost target identification.
\end{abstract}


\section{Introduction}
\gls{mimo} radar is useful in estimating the target information thanks to the larger degree-of-freedom coming from multiple transmitters and receivers. With this advantage, \gls{mimo} radar is used in wide range of application such as automotive radar \cite{li2008mimo, sun2020mimo}. 

Radar can measure range, relative velocity, angle of azimuth, which can be used to understand surroundings and used for Advanced Driver Assistance Systems (ADAS). Automotive radar detect targets based on the signal received, and \gls{doa} estimation is one of the key technique widely used in this field. While sensing surrounding environment from measured radar signals, ghost target can appear when the received signals are not directly coming from targets \cite{ammen2020ghost, zheng2024detection, liu2020multipath}. Specifically, multipath caused by reflections by surroundings yields ghost target whose \gls{doa} and \gls{dod} of the radar signal are different \cite{bilik2019rise}. 
These ghost targets can burden in automotive radar operation by tracking all of actual target and ghost targets and predicting their movement \cite{ammen2020ghost, sun2020mimo}. Therefore, for safe and robust \gls{adas} or autonomous driving system, it is crucial to identify ghost targets. 

There have been some studies for ghost target recognition in radar. For example, the authors of \cite{feng2021multipath, fertig2017knowledge} exploit Doppler information to detect multipath. Specifically, \cite{feng2021multipath} used a linear relationship between the target and multipath ghosts once a clear range-Doppler map is acquired. However, if the target is stationary compared to the detecting automotive radar, Doppler information can be elusive. The study of \cite{liu2020multipath} suggests to regard the other detected targets in very similar range are ghost targets of a target. However, this approach is not applicable when they are located in different range or can fail to detect multiple targets in similar range. The study of \cite{zheng2024detection} used iterative method use the matrices whose dimensions are defined as the number of targets, so as it finds more targets, it iteratively extend the problem dimension. It is worth noting that the authors introduced the concept of group sparsity of reversibility of the first-order propagation path, which has one additional intermediate reflection compared to direct path that is the sequence of transmitter-target-receiver. 
However, the received signal amplitude for a first-order path and its reverse path may differ due to different phase accumulation (interference) pattern. The study of \cite{liu2024data} proposed data-driven method that can classify ghost targets when large-scale training data with annotations is available. In synthetic aperture radar system, the authors of \cite{setlur2011multipath} proposed a method that can track the ghost targets back to their actual target. 

The authors of \cite{li2022multipath} introduced \gls{mpiaa} estimating the \gls{doa}/\gls{dod} angle grid from the received signal and identifies targets with mismatched \gls{doa} and \gls{dod} as ghost targets. \gls{mpiaa} is based on the well-known IAA algorithm for estimating \gls{doa} \cite{yardibi2010source}. However, MP-IAA exhibits limited performance under low Signal-to-Noise Ratio (SNR) con-ditions, which are common in urban traffic scenarios. 

In this paper, we develop an iterative algorithm, termed \textit{\gls{tigre}}, to address \gls{doa}/\gls{dod} grid estimation. Our main contributions in this paper are
outlined as follows
\begin{itemize}
    \item \textbf{New Target Induced Regularizer:} We propose a physics-inspired regularizer that captures structural priors of ghost target inherent in multipath propagation. Specifically, we use the fact that a ghost target in the first-order path appears where either its \gls{doa} or \gls{dod} is the same with the actual target's angular position (\gls{doa}=\gls{dod}) while the other is different.

     \item \textbf{Asymptotic $\ell_0$ Sparsity in \gls{doa}=\gls{dod} Grids:} We prove that our proposed regularizer asymptotically (over the TIGRE iterations) enforces $\ell_0$ sparsity when \gls{doa}=\gls{dod}. Such sparsity leads to more accurate ghost target estimation as they are induced by the actual target's presence.
     
    \item \textbf{Custom Initialization:} We propose a special initialization for TIGRE that exploits the uncertainty in the existence of ghost targets and the ambiguity in their locations. Experimentally, we validate the improvements coming from the custom initialization in both performance and enhanced convergence speed.
\end{itemize}

\section{System Model}




Let $X_{g,q}\in\bbC$ is the signal magnitude and phase that is transmitted at $q$-th \gls{dod} and received at $g$-th \gls{doa}. Assuming the visible \gls{doa}/\gls{dod} are discretized into $G$ and $Q$ grids, we write the matrix version of $\{X_{g,q}\}_{g=1,q=1}^{G,Q}$ as
\begin{equation}
    \bX = 
    \begin{bmatrix}
        X_{1,1} & X_{1,2} & \cdots & X_{1,Q} \\
        X_{2,1} & X_{2,2} & \cdots & X_{2,Q} \\
        \vdots & \vdots & \ddots & \vdots \\
        X_{G,1} & X_{G,2} & \cdots & X_{G,Q}
    \end{bmatrix}\in\bbC^{G\times Q},
    \label{eq:X_define}
\end{equation}

\begin{figure}
    \centering
    \includegraphics[width=0.47\linewidth]{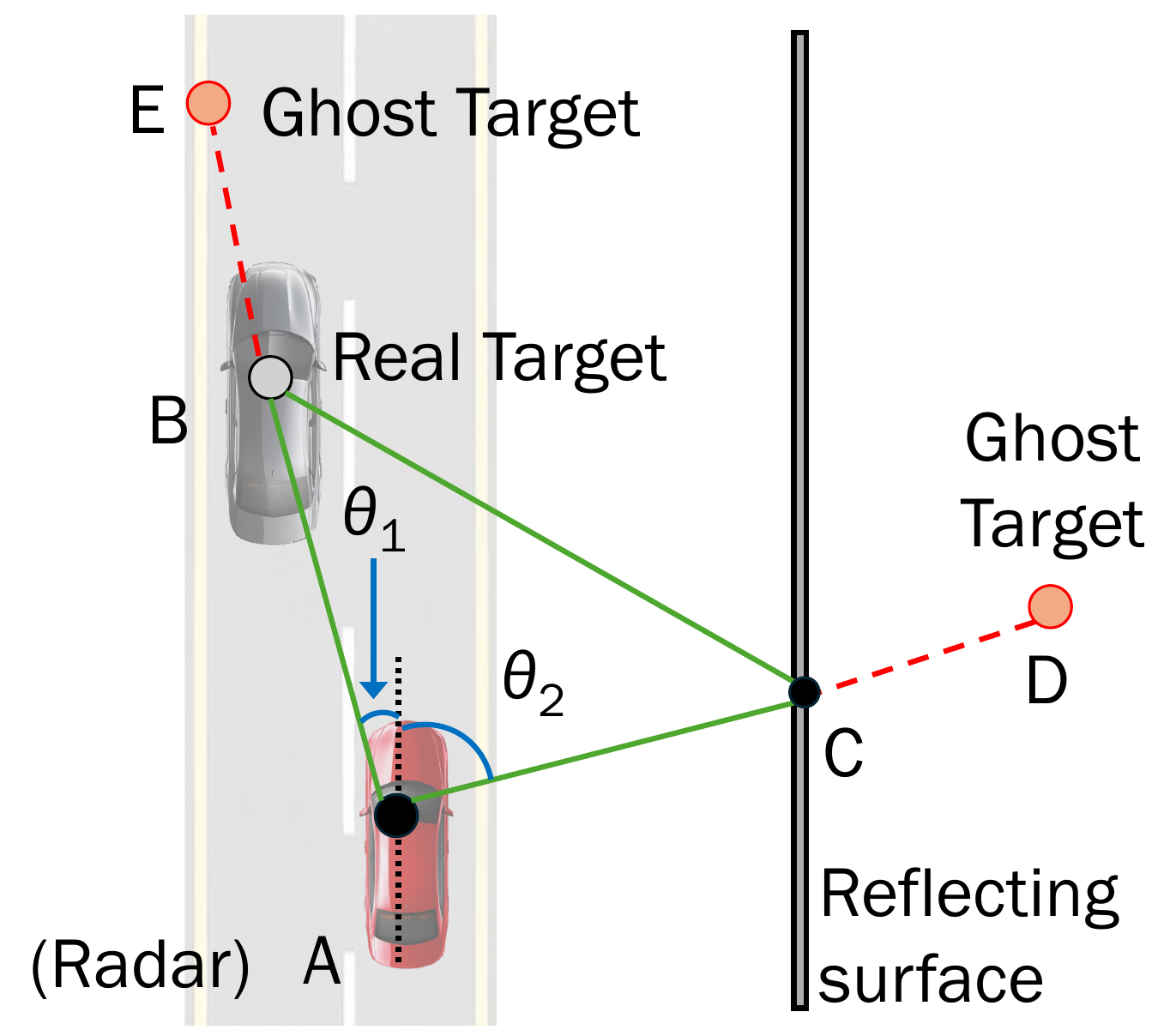}
    \vspace{-3pt}
    \caption{Geometry of multipath propagation in automotive radar. A is the automotive radar platform, B is real target, and C is reflection point.}
    \label{fig:automotive}
\end{figure}

Let the steering vectors for transmitter and the receiver at angle $\theta$ are $\ba_t(\theta)\in\bbC^{N_t\times 1}$ $\ba_r(\theta)\in\bbC^{N_r\times 1}$, which are
\begin{equation}
    \begin{aligned}
        \ba_t(\theta)=\left[1,e^{j2\pi\frac{d_t}{\lambda}\sin\theta}, ..., e^{j2\pi\frac{d_t}{\lambda}(N_t-1)\sin\theta}\right]^T\\
        \ba_r(\theta)=\left[1,e^{j2\pi\frac{d_r}{\lambda}\sin\theta}, ..., e^{j2\pi\frac{d_r}{\lambda}(N_r-1)\sin\theta}\right]^T.
    \end{aligned}    
    \label{eq:at_ar_define}
\end{equation}
We define the steering matrices for the discretized \gls{dod} and \gls{doa} as follows:
\begin{equation}
    \begin{aligned}
        \bA_t &\triangleq \left[ \ba_t(\theta_{t,1}), \ldots, \ba_t(\theta_{t,G}) \right] \in \mathbb{C}^{N_t \times G}\, \\
        \bA_r &\triangleq \left[ \ba_r(\theta_{r,1}), \ldots, \ba_r(\theta_{r,Q}) \right] \in \mathbb{C}^{N_r \times Q},
    \end{aligned} 
    \label{eq:At_Ar_define}
\end{equation}
and define the following matrix $\bA$:
\begin{equation}
    \bA=\bA_r\otimes\bA_t=[\ba_1, \ba_2, \cdots, \ba_{GQ}]\in\bbC^{NtNr\times GQ},
\end{equation}
where $\otimes$ is Kronecker product and $\ba_i\in\bbC^{N_tN_r\times 1}\ \forall i$.
Using these matrices, the received signal can be expressed as
\begin{equation}
    \by = \vectrz\{\bA_r\bX\bA_t^T\} + \be\in\bbC^{N_tN_r\times 1}
    = \bA\bx + \be
    \label{eq:y_define}
\end{equation}
where $\bx\in\bbC^{GQ\times 1}$ is a vectorized version of $\bX$, \ie, 
\begin{equation}
    \bx = [X_{1,1},X_{2,1},\cdots,X_{G,1},X_{1,2},X_{2,2}\cdots,X_{G,Q}]^T,
    \label{eq:x_define}
\end{equation}
and $\be$ is noise. 

In automotive radar, only direct paths and first-order paths are considered in general as higher-order paths naturally have more severe path loss resulted from longer distance or can be easily distinguished by different distance comparatively \cite{zheng2024detection}. The recent study of \cite{li2022multipath} proposed \gls{mpiaa} to estimate \gls{doa}/\gls{dod} angle grid to identify ghost targets. The method aims to solve the following element-wise problem:
\begin{equation}
\begin{split}
    \underset{X_{g,q}^n(=x_i)}{\min}\ f_{g,q}(X_{g,q}&)    
    =  \underset{x_i}{\min}\ 
    \left[\by-x_i\ba_i\right]^H\bQ_i^{-1}(\bx)\left[\by-x_i\ba_i\right]\\
    = & \underset{X_{g,q}}{\min}\ 
    \left[\by-X_{g,q}\ba_i\right]^H\bQ_i^{-1}(\bx)\left[\by-X_{g,q}\ba_i\right]
\end{split}
    \label{eq:WLS_original}
\end{equation}
where $i=g + (q-1)G$ so that $X_{g,q}=x_i$ in \eqref{eq:x_define},
\begin{equation}
    \ba_i = \ba_r(\theta_{r,g})\otimes\ba_t(\theta_{t,q}),
\end{equation}
and $\bQ_i(\bx)$ is noise covariance matrix such that
\begin{equation}
    \bQ_i(\bx)=\bA\diag(|\bx|^2)\bA^H - x_i\ba_i\ba_i^H
    \in\bbC^{N_tN_r\times N_tN_r}.
    \label{eq:Q_i}
\end{equation}
Therefore, the problem is nonconvex where global minimum is elusive \cite{monga2017handbook}. To address this minimization problem, \gls{mpiaa} updates $X_{g,q}$ by using a closed form solution of the \gls{wls} problem in \eqref{eq:WLS_original} with assuming $\bQ_i(\bx)$ is fixed and computes $\bQ_i(\bx)$ separately. Once correctly reconstructed from the received signal, $X_{g,q}$ with significant amplitude ($|X_{g,q}|\gg0$) is regarded as \textit{detected target}. The detected targets with \gls{doa}$\neq$\gls{dod} can be regarded as a ghost target \cite{li2022multipath}.


\section{Proposed Method}
\subsection{Target Induced Regularizer on DOA$\neq$DOD Grids}
\label{subsec:gs_regularizer}
Without losing generalizability, we set $G=Q$ so that 
the diagonal elements of $\bX$ becomes \gls{doa}$=$\gls{dod} case\footnotemark.
\footnotetext{If $G\neq Q$, one can still use the proposed method by applying the \gls{doa}$=$\gls{dod} case to the corresponding elements of $\bX$ in their setup.} From \eqref{eq:WLS_original}, we can deduce that the generalized cost to be minimized can be expressed as $\Tilde{f}(\bX)=\sum_{g=1,q=1}^{G,Q} f_{g,q}(X_{g,q})$. 
At $n$-th iteration step to acquire $\bX^{n+1}$, we introduce the regularizer $\tcR^{n}(\bX)=\sum_{g=1,q=1}^{G,Q}\cR_{g,q}^{n}(X_{g,q})$ where
\begin{equation}
    \cR_{g,q}^{n}(X_{g,q}) 
    = \mu_{g,q}^n|X_{g,q}|^2
    = \frac{1}{|X_{g,g}^n|^2+|X_{q,q}^n|^2+\epsilon_0}|X_{g,q}|^2,
    \label{eq:regularizer}
\end{equation}
yielding the loss function 
\begin{equation}
\begin{split}
    \cL^{n}(\bX)&=\sum_{g=1,q=1}^{G,Q} \cL_{g,q}^{n}(X_{g,q})\\
    &= \sum_{g=1,q=1}^{G,Q} f_{g,q}^{n}(X_{g,q}) + \lambda_{g,q}\cR_{g,q}^{n}(X_{g,q}),
\end{split}
\end{equation}
where $\lambda_{g,q}>0$ controls the importance of the regularizer $\cR_{g,q}^n(X_{g,q})\ \forall g,q,n$.

In \eqref{eq:regularizer}, $\epsilon_0>0$ is a small positive number to prevent the denominator become zero. When either $|X_{g,g}|^n$ or $|X_{q,q}^n|$ is small, $\mu_{g,q}^n>0$ becomes large, thereby assigning greater weight to $|X_{g,q}|^2$ in the minimization of $\cR_{g,q}^n(X_{g,q})$.
Therefore, this regularizer encourages that the estimated \gls{doa}/\gls{dod} grids $X_{g,q}\ (g\neq q)$ have a ghost target appear only when there is an actual target either $(g,g)$ or $(q,q)$.

\textit{\acrfull{tigre}} aims to solve the following element-wise (or grid-wise) loss function to acquire $X_{g,q}^{n+1}$:

\begin{equation}
\begin{split}
    X_{g,q}^{n+1}
    =&\argmin_{X_{g,q}}\,\cL_{g,q}^{n}(X_{g,q})\\
    =&\argmin_{X_{g,q}}\,f_{g,q}^{n}(X_{g,q}) + \lambda_{g,q}\cR_{g,q}^{n}(X_{g,q})
\end{split}
    \label{eq:X_n_plus_1}
\end{equation}
 

We can derive the closed form solution of \eqref{eq:X_n_plus_1} by differentiating $\cL_{g,q}^{n}(X_{g,q})$ with $X_{g,q}$:
\begin{equation}
    \begin{split}
        \frac{\partial}{\partial X_{g,q}^*}&\cL_{g,q}^n(X_{g,q})
        = \frac{\partial}{\partial X_{g,q}^*} \left\{f_{g,q}^{n}(X_{g,q}) + \lambda_{g,q} \mathcal{R}_{g,q}^{n}(X_{g,q})\right\}\\
        =& -\ba_i^H\bW_i^{n}\left(\by-X_{g,q}\ba_i\right) + \frac{\lambda_{g,q}X_{g,q}}{|X_{g,g}^n|^2 + |X_{q,q}^n|^2 + \epsilon_0},
    \end{split}
    \label{eq:partial_derivative}
\end{equation}
where $\bW_i^{n}=\bQ_i^{-1}(\bx^n)$.
Defining $D_{g,q}^n=|X_{g,g}^n|^2 + |X_{q,q}^n|^2 + \epsilon_0$ for brevity, we can find a solution that will be in a stationary point of $\cL_{g,q}^{n}(X_{g,q})$ in terms of $X_{g,q}$:
\begin{equation}
    -D_{g,q}^n\ba_i^H\bQ_i^{-1}\left(\by-X_{g,q}\ba_i\right) + \lambda_{g,q}X_{g,q} = 0
\end{equation}

We can find a closed form solution of $X_{g,q}$ as
\begin{equation}
    X_{g,q}^{n+1} = \frac{D_{g,q}^n(\ba_i^H\bW_i^{n}\by)}{D_{g,q}^n(\ba_i^H\bW_i^{n}\ba_i) + \lambda_{g,q}}.
    \label{eq:X_gq_star}
\end{equation}

\subsection{Asymptotic $\ell_0$ Sparsity Behavior in DOA$/$DOD Grids}
\label{subsec:diag}
Regarding the effect of the proposed regularizer to the diagonal part of $\bX$, we show the following lemma:
\begin{lemma}
    The proposed regularizer in \eqref{eq:regularizer} encourages $\ell_0$ sparsity asymptotically in the \gls{doa}$=$\gls{dod} grids of $\bX$, \ie, the diagonal elements of $\bX$ when $G=Q$.
\end{lemma}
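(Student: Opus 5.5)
Specializing the regularizer \eqref{eq:regularizer} to the diagonal, i.e.\ setting $q=g$, gives
\begin{equation*}
\cR_{g,g}^n(X_{g,g})=\frac{|X_{g,g}|^2}{2|X_{g,g}^n|^2+\epsilon_0}.
\end{equation*}
This has exactly the form of an iteratively reweighted $\ell_2$ surrogate for the $\ell_0$ quasi-norm. The plan is to show that, along the TIGRE iterations, the diagonal regularizer evaluated at the next iterate behaves like a (scaled) indicator of nonzero entries.

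\textbf{Step 1.} Suppose the iterates converge on the diagonal, $X_{g,g}^n\to X_{g,g}^\star$. This is the asymptotic regime in which the lemma is stated. In this regime $|X_{g,g}^{n+1}|^2/|X_{g,g}^n|^2\to 1$ whenever $X_{g,g}^\star\neq 0$. Hence
\begin{equation*}
\cR_{g,g}^n(X_{g,g}^{n+1})\to \frac{|X_{g,g}^\star|^2}{2|X_{g,g}^\star|^2+\epsilon_0},
\end{equation*}
which tends to $1/2$ as $\epsilon_0\to 0$. If instead $X_{g,g}^\star=0$, the limit is $0$. Summing over $g$ yields
\begin{equation*}
\sum_g \cR_{g,g}^n\to \tfrac12\|\mathrm{diag}(\bX^\star)\|_0 .
\end{equation*}
The limit is uniform in $\epsilon_0$ provided $\epsilon_0\ll\min_{g:X^\star_{g,g}\neq0}|X_{g,g}^\star|^2$. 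So the diagonal penalty term in $\cL^n$ approximates $\tfrac{\lambda}{2}\|\mathrm{diag}(\bX)\|_0$, which is the claimed $\ell_0$ sparsity encouragement.

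\textbf{Step 2.} The closed-form update \eqref{eq:X_gq_star} should reinforce this behavior. With $D_{g,g}^n=2|X_{g,g}^n|^2+\epsilon_0$:
\begin{itemize}
\item If $|X_{g,g}^n|$ is small, then $D^n_{g,g}\approx\epsilon_0$. The update is then $X_{g,g}^{n+1}\approx \epsilon_0\,\ba_i^H\bW_i^n\by/\lambda_{g,g}$, so small diagonal entries are driven toward zero, to order $\epsilon_0$.
\item If $|X_{g,g}^n|$ is large, then $D^n_{g,g}\,\ba_i^H\bW_i^n\ba_i\gg\lambda_{g,g}$. The update approaches the unregularized MP-IAA WLS estimate $\ba_i^H\bW_i^n\by/\ba_i^H\bW_i^n\ba_i$, so large entries incur essentially no shrinkage.
\end{itemize}
This is the hallmark of the $\ell_0$ penalty, as opposed to $\ell_1$, which shrinks large entries by a constant amount. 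It also parallels the known reweighted-$\ell_2$ and FOCUSS results, which I would cite as the analogy.

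\textbf{Main obstacle.} The difficulty is that the paper establishes no convergence of the TIGRE iterates, so Step 1 must be stated under the assumption that the iteration converges, or at least that the ratio $|X^{n+1}_{g,g}|/|X^n_{g,g}|\to1$ on the support. I would make this an explicit hypothesis. Step 2 would then serve as the heuristic justification that zero is an attracting fixed point for small entries. For that, I would bound $|\ba_i^H\bW_i^n\by|$ using positive definiteness of $\bQ_i$, which gives $|X_{g,g}^{n+1}|\le C\epsilon_0/\lambda_{g,g}$.
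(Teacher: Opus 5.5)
Your Step 1 is the paper's argument: assume the diagonal iterates converge, observe that $|z_g^{n+1}|^2/(2|z_g^n|^2+\epsilon_0)$ tends to roughly $1/2$ on the support and to $0$ off it, and conclude that the diagonal regularizer approximates $\tfrac12\|\bz\|_0$. The paper likewise takes convergence as an assumption and appeals to Cand\`es--Wakin--Boyd for this reweighting heuristic. Step 2 is extra heuristic that the paper does not attempt, and its planned bound would not follow as stated: a bound $|X_{g,g}^{n+1}|\le C\epsilon_0/\lambda_{g,g}$ with $C$ independent of $n$ needs a uniform lower bound on the eigenvalues of $\bQ_i(\bx^n)$, not just positive definiteness at each step, and $\bQ_i$ as defined in \eqref{eq:Q_i} has no noise floor, so it need not even be invertible.
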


\begin{proof}
    Let $\bz=\diag(\bX)\in\bbC^{G\times 1}$ and $\tcR_\diag^{n}(\bz)=\sum_{g=1}^{G}\cR_{g,g}^{n}(z_{g})=\sum_{g=1}^{G}\cR_{g,g}^{n}(X_{g,g})$. 
    Assuming $z_g^n$ converges to $\brz_g$ as $n\rightarrow \infty$, the regularizer $\cR_{g,g}^n(z_g^{n+1})$ becomes
    \begin{equation}
        \lim_{n\rightarrow \infty}\cR_{g,g}^{n}(z_g^{n+1}) 
        = \lim_{z_g^n\rightarrow \brz_g}\frac{|z_g^{n+1}|^2}{2|z_g^n|^2+\epsilon_0}
        \label{eq:Rgg_conv}
    \end{equation}
    As the above term is approximately $1/2$ when $\brz_g\neq 0$ and $0$ when $\brz_g=0$ \cite{candes2008enhancing}, we can approximate
    \begin{equation}
        \lim_{n\rightarrow \infty}\tcR_\diag^{n}(\bz^{n+1})
        =\lim_{n\rightarrow \infty}\sum_{g=1}^G\cR_{g,g}^{n}(z_g^{n+1})
        \approx \frac{1}{2}\|\bz\|_0,
        \label{eq:R_l0norm}
    \end{equation}
    \ie, regularizer $\tcR_\diag^{n}(\bz)$ is approximated as $\ell_0$ measure of $\bz$.
    This naturally enforces sparsity on \gls{doa}$=$\gls{dod}, i.e. the diagonal of $\bX$. Because a $0$ in the diagonal also encourages the corresponding row and column to be all $0$'s following \eqref{eq:regularizer} - $\bf X$ is also expected to be sparse asymptotically.
\end{proof}

\subsection{Custom Initialization}
\label{subsec:init}
As the existence of multipath is unknown in the beginning, we aim to find initial guess of $\bX$ by finding the best diagonal matrix for a given measured signal $\by$. Under this setting, the initial $\bX$ can be found via solving the following:
\begin{equation}
    \min_{\mathbf{X} = \mathrm{diag}(\mathbf{z})} \big\| \by - \vectrz\big(\mathbf{A}_r \mathbf{X} \mathbf{A}_t^T\big) \big\|_2^2,
    \label{eq:initial_guess}
\end{equation}
where $\bz=[z_1,z_2\cdots,z_G]^T\bbC^{G\times 1}$ such that $X_{g,g}=z_g$.

Define the matrix $\mathbf{C} \in \mathbb{C}^{N_tN_r \times G}$ as:
\begin{equation}
    \mathbf{C} = [ \mathbf{a}_{r,1} \otimes \mathbf{a}_{t,1}, \mathbf{a}_{r,2} \otimes \mathbf{a}_{t,2}, \cdots , \mathbf{a}_{r,G} \otimes \mathbf{a}_{t,G} ],
\end{equation}
where $\ba_{r,g}=\ba_r(\theta_{r,g})$ and $\ba_{t,g}=\ba_t(\theta_{t,q})$.
Then the optimization problem becomes:
\begin{equation}
    \min_{\mathbf{z} \in \bbC^{G\times 1}} \left\| \by - \mathbf{C} \bz \right\|_2^2
\end{equation}
The closed-form solution is:
\begin{equation}
    \mathbf{z}^\star = \big(\mathbf{C}^H \mathbf{C}\big)^{-1} \mathbf{C}^H \by
    \label{eq:z_star}
\end{equation}
Finally, the estimated diagonal matrix is:
\begin{equation}
    \mathbf{X}^0 = \diag(\mathbf{z}^\star)
    \label{eq:X_initial}
\end{equation}

\subsection{Step-By-Step Algorithm of TIGRE}
\cref{algorithm:TIGRE} shows the step-by-step algorithm to estimate \gls{doa}/\gls{dod} angle grid $\bX$ from measured signal $\by$. In line 3, using the custom initialization, $\bX$ is initialized as $\bX^0$. In line 5-7, $\bX$ is updated aiming to minimize the loss defined in \eqref{eq:X_n_plus_1} until convergence of $\bX$.

\begin{algorithm}[t]
    \caption{TIGRE algorithm.}
    \label{algorithm:TIGRE}
    \begin{algorithmic}[1]
        
        \State {\textbf{Inputs}: $\by,\ G,\ Q,\ \epsilon_0,\ \epsilon_\bx,\ \{\lambda_{g,q}\}_{g=1,q=1}^{G,Q},\ N_{\max}$ } 
        \State \textbf{Outputs}: $\bX^\star$
        \State Initialize $\bX^0=\diag(\mathbf{x}^\star)$ following \eqref{eq:z_star} and \eqref{eq:X_initial}.
        \State \textbf{for} $n=0:N_{\max}$
        \State\quad Calculate $\bQ_i^n(\bx^n)\ \forall i$ following \eqref{eq:Q_i} for $\bx^n=\vectrz(\bX^n)$.\vspace{2pt}
        \State \quad Compute $X_{g,q}^{\star, n+1}$ using \eqref{eq:X_gq_star} for $\{g,q\}_{g=1,q=1}^{G,Q}$.
        \vspace{2pt}
        \State \quad $\bx^{n+1}=\vectrz\big(\bX^{\star, n+1}\big)$
        \State \quad \textbf{if} $|\bx^{n+1}-\bx^n|<\epsilon_\bx$ \textbf{then}
        \State \qquad break
        \State \quad \textbf{end if}
        \State \textbf{end for}
        \State $\bX^\star = \bX^{\star, n+1}$
    \end{algorithmic}
\end{algorithm}

\section{Numerical Results}
\subsection{Simulation Settings}
We consider co-located \gls{mimo} radar with $N_t=N_r=8$ and $d_t=d_r=\lambda/2$ where $\lambda$ is wavelength. \gls{snr} is set to 10 dB. $\epsilon_\bx=10^{-2}$ to decide convergence in $\bx$ during iteration of the competing optimization methods.

\begin{table}[t]
    \caption{DOA/DOD angular information.}
    \label{tab:target_info}
    \centering
    \begin{tabular}{c||c|c|c|c|c|c}
         Target & \multicolumn{2}{c|}{\textbf{Actual Target}}  & \multicolumn{2}{c|}{\textbf{Ghost 1}} & \multicolumn{2}{c}{\textbf{Ghost 2}}\\
         \cline{2-7} 
         No. & $|X|$ & $\theta_r=\theta_t$ 
         & $|X|$ & $(\theta_r, \theta_t)$ 
         & $|X|$ & $(\theta_r, \theta_t)$ \\
        \hline
        1 
        & $1$ & $-20^\circ$ 
        & $0.7$ & $(-20,40)^\circ$ 
        & $0.5$ & $(40,-20)^\circ$ \\
        2
        & $1$ & $-60^\circ$ 
        & $0.7$ & $(-60,60)^\circ$ 
        & $0.5$ & $(60,-60)^\circ$ \\
        3
        & $1$ & $-40^\circ$ 
        & $0.7$ & $(-40,50)^\circ$
        & $0.5$ & $(50,-40)^\circ$ \\
    \end{tabular}
    \vspace{-10pt}
\end{table}

We compare \gls{tigre} with state-of-the-art \gls{ira} \cite{fang2016super, tang2021off} and
\gls{mpiaa} \cite{li2022multipath}\footnotemark.
\footnotetext{\gls{ira} is a method for \gls{doa} estimation, but adapted to the \gls{doa}/\gls{dod} grid.} 
To evaluate the performance gain of the custom initialization, we simulated both \gls{tigre} with random initialization of $\bX$--marked as \gls{tigre}-no Init.--and the custom initialization articulated in \cref{subsec:init}--denoted as \gls{tigre}. For hyperparameters of \gls{tigre}, we set $\lambda_{g,q}=10$ for $g\neq q$ and $\lambda_{g,q}=1$ for $g=q$ via cross-validation \cite{monga2017handbook}.

\subsection{Simulation Results}
\label{subsec:results}
\cref{tab:result_target1} shows numerical results of the \gls{tigre} and competing methods for single target scenario: target 1 in \cref{tab:target_info}. The values (Error $\|\bX^\star-\bX^\text{true}\|_F^2$, number of iterations, and computational time in seconds) are average of 100 random samples for fixed target information, and the best two performance is bold faced. Both \gls{tigre}-no Init. and \gls{tigre} shows the smallest estimation error among the methods. \gls{tigre} shows the second smallest number of iterations and shortest computation time, but very close to the best ones by \gls{ira}. 


\begin{table}
    \small
    \caption{Averaged results for one target.}
    \label{tab:result_target1}
    \centering
    \begin{tabular}{c||c|c|c}
        \textbf{Method} & \textbf{Error} $(\downarrow)$ & \textbf{Iteration} $(\downarrow)$ & \textbf{Time (sec)} $(\downarrow)$\\
        \hline
        \gls{ira} \cite{fang2016super, tang2021off} 
        & 4.91 & \textbf{19.53} & \textbf{1.91}\\
        \gls{mpiaa} \cite{li2022multipath}
        & 4.58 & 36.59 & 3.25 \\
        \gls{tigre}-no Init.  
        & \textbf{3.06} & 22.45 & 2.31  \\
        \gls{tigre} 
        & \textbf{0.59} & \textbf{20.07} & \textbf{2.05}
    \end{tabular}
\end{table}

\cref{tab:result_target2} shows the results when there are two targets: target 1 and target 2 in \cref{tab:target_info}. \gls{tigre} shows its superiority in the estimation accuracy and the computational efficiency. The visual results shown in \cref{fig:X_2D} (a), (b) are visualization of resulting \gls{doa}/\gls{dod} angle grid $\bX$ in this two target case, where the color denotes the value of $|\bX|$. While \gls{mpiaa} and \gls{tigre} show the best accuracy in \cref{tab:result_target2}, the result of \gls{mpiaa} has multiple peaks in its diagonal part (at the location of the actual target)
like $(30,30)^\circ$ and $(80,80)^\circ$. 

\begin{table}[t]
    \caption{Results for two targets.}
    \label{tab:result_target2}
    \centering
    \begin{tabular}{c||c|c|c}
        \textbf{Method} & \textbf{Error} $(\downarrow)$ & \textbf{Iteration} $(\downarrow)$ & \textbf{Time (sec)} $(\downarrow)$\\
        \hline
        \gls{ira} \cite{fang2016super, tang2021off} 
        & 5.52 & \textbf{14.08} & \textbf{1.26}\\
        MP-IAA \cite{li2022multipath}
        & 4.69 & 27.03 & 2.29 \\
        TIGRE-no Init.  
        & \textbf{3.14} & 17.82 & 1.83  \\
        \gls{tigre} 
        & \textbf{0.86} & \textbf{15.50} & \textbf{1.59}
    \end{tabular}
\end{table}

\cref{tab:result_target3} shows the case of three targets: target 1, 2, and 3. Again, \gls{tigre} shows the outstanding performance among the counterparts. \cref{fig:X_2D} (c), (d) show that, while \gls{mpiaa} has multiple false alarms, \gls{tigre} produces more precisely estimated \gls{doa}/\gls{dod} angle grid. \cref{fig:x_1D} shows the 1D-plot of resulting amplitude of $\bx(=\vectrz(\bX))$ and the points that is larger than 0.4 is marked. Green solid stems (with thicker line) shows true values, \ie, having peaks at the three actual target locations (three peaks with the largest values) and two ghost targets of each target. Blue solid line (thinner) is the result from \gls{mpiaa}, and red dotted line is from \gls{tigre}. While the \gls{mpiaa} yields a lot of false alarms (marked as $+$), \gls{tigre} provides comparatively correct estimation (marked as $\times$), especially capturing the actual targets.

In Table \ref{tab:result_target1}-\ref{tab:result_target3}, the benefits of \gls{tigre} over \gls{tigre}-no Init. in the estimation performance and convergence speed are enabled by the custom initialization proposed in \cref{subsec:init}.

\begin{table}[t]
    \caption{Results for three targets.}
    \label{tab:result_target3}
    \centering
    \begin{tabular}{c||c|c|c}
        \textbf{Method} & \textbf{Error} $(\downarrow)$ & \textbf{Iteration} $(\downarrow)$ & \textbf{Time (sec)} $(\downarrow)$\\
        \hline
        \gls{ira} \cite{fang2016super, tang2021off} 
        & 6.71 & \textbf{20.68} & \textbf{1.44}\\
        MP-IAA \cite{li2022multipath}
        & 5.62 & 28.25 & 1.94 \\
        TIGRE-no Init.  
        & \textbf{3.39} & 21.58 & 1.54  \\
        TIGRE 
        & \textbf{1.57} & \textbf{13.00} & \textbf{1.20}
    \end{tabular}
\end{table}
    
\section{Conclusion}
    WE develop \gls{tigre} - a regularized estimator over the \gls{doa}/\gls{dod} grid for ghost target identification in automotive radar. Unlike existing methods that do not consider the angular correlation between an actual target and its ghost targets, \gls{tigre} develops target induced regularizers to mitigate the estimated signal at \gls{doa}/\gls{dod} where a ghost target cannot exist based on the estimated actual target locations. Further, custom initialization is designed that finds the best \gls{doa}/\gls{dod} estimation based on only actual targets, which is prior to ghost targets. \gls{tigre} is shown to outperform the state of the art in both estimation accuracy and computational efficiency. 

\begin{figure}
    \centering
    \includegraphics[width=0.85\linewidth]{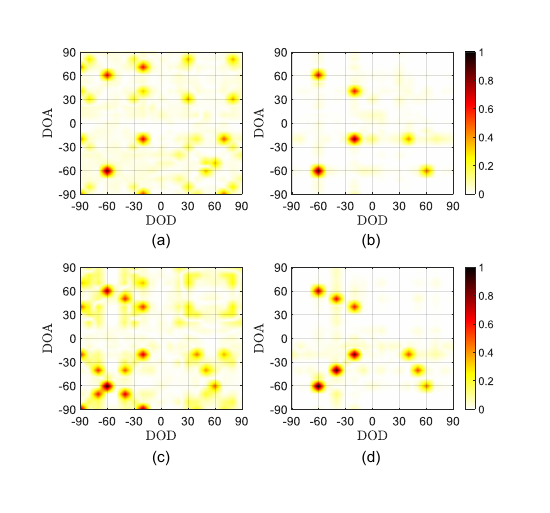}
    \vspace{-3pt}
    \caption{Estimated \gls{doa}/\gls{dod} angle grid $\bX$. (a), (b) are the case of two targets; and (c), (d) are the case of three targets. (a), (c) are from \gls{mpiaa}; and (b),(d) are from the proposed \gls{tigre} algorithm.}
    \label{fig:X_2D}
\end{figure}
\begin{figure}
    \centering
    \includegraphics[width=0.7\linewidth]{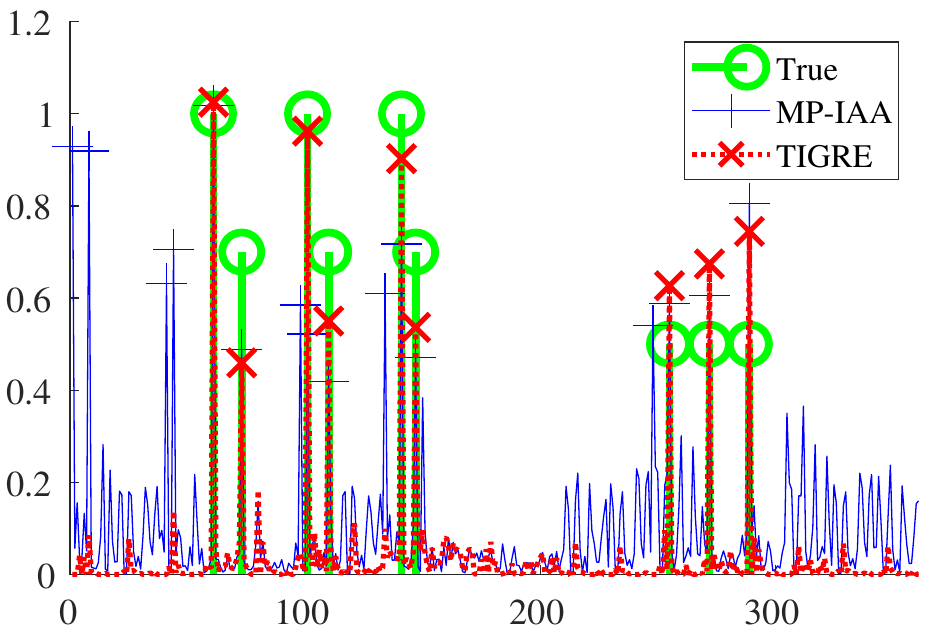}
    \vspace{-3pt}
    \caption{Angle grids in a vector form $\bx$ estimated by MP-IAA and TIGRE for the scenario of three targets, each having two ghost targets.}
    \label{fig:x_1D}
\end{figure}




\bibliographystyle{IEEEtran}

\bibliography{reference}

\end{document}

%% file: packages.tex
\usepackage{cite}
\usepackage{amsmath,amssymb,amsfonts}
\usepackage{algorithm}
\usepackage{algpseudocode}
\usepackage{graphicx}
\usepackage{textcomp}
\usepackage{xcolor}
\usepackage[acronyms]{glossaries}
\usepackage{xspace}
\usepackage{tabularx}
\usepackage{colortbl}
\usepackage[breaklinks=true,colorlinks,bookmarks=false]{hyperref}
\usepackage[capitalize]{cleveref}
\usepackage{enumitem}
\usepackage{multirow}
\usepackage{amsthm}
\makeatletter
\newcommand{\Eqref}[1]{\textup{\tagform@{\ref*{#1}}}}
\makeatother

%% file: acronyms.tex
\newacronym{adas}{ADAS}{Advanced Driver-Assistance Systems}
\newacronym{blrc}{BLRC}{Bayesian Linear Regression with Cauchy Prior}
\newacronym{doa}{DOA}{Direction-Of-Arrival}
\newacronym{dod}{DOD}{Direction-Of-Departure}
\newacronym{iaa}{IAA}{Iterative Adaptive Approach}
\newacronym{ira}{IRA}{Iterative Reweighted Algorithm}
\newacronym{mimo}{MIMO}{Multiple-Input, Multiple-Output}
\newacronym{mpiaa}{MP-IAA}{Multi-Path IAA}
\newacronym{snr}{SNR}{Signal-to-Noise Ratio}
\newacronym{tigre}{TIGRE}{Target Induced angle-Grid Regularized Estimation}
\newacronym{wls}{WLS}{Weighted Least-Squares}

%% file: definitions_math.tex
\def\bbC{\ensuremath{{\mathbb C}}}

\def\ba{\ensuremath{{\bf a}}}

\def\be{\ensuremath{{\bf e}}}

\def\bx{\ensuremath{{\bf x}}}

\def\by{\ensuremath{{\bf y}}}
\def\bz{\ensuremath{{\bf z}}}
\def\brz{\ensuremath{\bar{z}}}

\def\bA{\ensuremath{{\bf A}}}

\def\cL{\ensuremath{{\mathcal L}}}

\def\bQ{\ensuremath{{\bf Q}}}

\def\bW{\ensuremath{{\bf W}}}
\def\bX{\ensuremath{{\bf X}}}

\def\cL{\ensuremath{{\mathcal L}}}

\def\cR{\ensuremath{{\mathcal R}}}
\def\tcR{\ensuremath{\widetilde{\mathcal R}}}

\def\diag{\ensuremath{{\operatorname{diag}}}}

\def\vectrz{\ensuremath{{\mathrm{vec}}}}

\makeatletter
\providecommand{\onedot}{\futurelet\@let@token\@onedot}
\providecommand{\@onedot}{\ifx\@let@token.\else.\null\fi\xspace}
\makeatother

\providecommand{\ie}{\emph{i.e}\onedot}

\DeclareMathOperator*{\argmin}{arg\,min}

\makeatletter
\newcommand*{\addFileDependency}[1]{
  \typeout{(#1)}
  \@addtofilelist{#1}
  \IfFileExists{#1}{}{\typeout{No file #1.}}
}
\makeatother

\newtheorem{theorem}{Theorem}
\newtheorem{lemma}[theorem]{Lemma}
\crefname{lemma}{lemma}{lemmas}

%% file: reference.bib
@inproceedings{li2022multipath,
  title={Multipath ghost target identification for automotive MIMO radar},
  author={Li, Yunda and Shang, Xiaolei},
  booktitle={2022 IEEE 96th Vehicular Technology Conference (VTC2022-Fall)},
  pages={1--5},
  year={2022},
  organization={IEEE}
}

@article{yardibi2010source,
  title={Source localization and sensing: A nonparametric iterative adaptive approach based on weighted least squares},
  author={Yardibi, Tarik and Li, Jian and Stoica, Petre and Xue, Ming and Baggeroer, Arthur B},
  journal={IEEE Transactions on Aerospace and Electronic Systems},
  volume={46},
  number={1},
  pages={425--443},
  year={2010},
  publisher={IEEE}
}

@article{liu2024data,
  title={A Data-Driven Method for Indoor Radar Ghost Recognition with Environmental Mapping},
  author={Liu, Ruizhi and Song, Xinghui and Qian, Jiawei and Hao, Shuai and Lin, Yue and Xu, Hongtao},
  journal={IEEE Transactions on Radar Systems},
  year={2024},
  publisher={IEEE}
}

@article{feng2021multipath,
  title={Multipath ghost recognition for indoor MIMO radar},
  author={Feng, Ruoyu and De Greef, Eddy and Rykunov, Maxim and Sahli, Hichem and Pollin, Sofie and Bourdoux, Andr{\'e}},
  journal={IEEE Transactions on Geoscience and Remote Sensing},
  volume={60},
  pages={1--10},
  year={2021},
  publisher={IEEE}
}

@article{setlur2011multipath,
  title={Multipath model and exploitation in through-the-wall and urban radar sensing},
  author={Setlur, Pawan and Amin, Moeness and Ahmad, Fauzia},
  journal={IEEE Transactions on Geoscience and Remote Sensing},
  volume={49},
  number={10},
  pages={4021--4034},
  year={2011},
  publisher={IEEE}
}

@article{zheng2024detection,
  title={Detection of ghost targets for automotive radar in the presence of multipath},
  author={Zheng, Le and Long, Jiamin and Lops, Marco and Liu, Fan and Hu, Xueyao and Zhao, Chuanhao},
  journal={IEEE Transactions on Signal Processing},
  year={2024},
  publisher={IEEE}
}

@article{sun2020mimo,
  title={MIMO radar for advanced driver-assistance systems and autonomous driving: Advantages and challenges},
  author={Sun, Shunqiao and Petropulu, Athina P and Poor, H Vincent},
  journal={IEEE Signal Processing Magazine},
  volume={37},
  number={4},
  pages={98--117},
  year={2020},
  publisher={IEEE}
}

@book{li2008mimo,
  title={MIMO radar signal processing},
  author={Li, Jian and Stoica, Petre},
  year={2008},
  publisher={John Wiley \& Sons}
}

@inproceedings{liu2020multipath,
  title={Multipath propagation analysis and ghost target removal for FMCW automotive radars},
  author={Liu, Chenwen and Liu, Shengheng and Zhang, Cheng and Huang, Yongming and Wang, Haiming},
  booktitle={IET International Radar Conference (IET IRC 2020)},
  volume={2020},
  pages={330--334},
  year={2020},
  organization={IET}
}

@book{monga2017handbook,
  title={Handbook of convex optimization methods in imaging science},
  author={Monga, Vishal},
  year={2017},
  publisher={Springer}
}

@article{tang2021off,
  title={Off-grid DOA estimation with mutual coupling via block log-sum minimization and iterative gradient descent},
  author={Tang, Wen-Gen and Jiang, Hong and Zhang, Qi},
  journal={IEEE Wireless Communications Letters},
  volume={11},
  number={2},
  pages={343--347},
  year={2021},
  publisher={IEEE}
}

@article{fang2016super,
  title={Super-resolution compressed sensing for line spectral estimation: An iterative reweighted approach},
  author={Fang, Jun and Wang, Feiyu and Shen, Yanning and Li, Hongbin and Blum, Rick S},
  journal={IEEE Transactions on Signal Processing},
  volume={64},
  number={18},
  pages={4649--4662},
  year={2016},
  publisher={IEEE}
}

@inproceedings{ammen2020ghost,
  title={A ghost target suppression technique using interference replica for automotive FMCW radars},
  author={Ammen, Daiki and Umehira, Masahiro and Wang, Xiaoyan and Takeda, Shigeki and Kuroda, Hiroshi},
  booktitle={2020 IEEE Radar Conference (RadarConf20)},
  pages={1--5},
  year={2020},
  organization={IEEE}
}

@article{fertig2017knowledge,
  title={Knowledge-aided processing for multipath exploitation radar (MER)},
  author={Fertig, Louis B and Baden, J Michael and Guerci, Joseph R},
  journal={IEEE Aerospace and Electronic Systems Magazine},
  volume={32},
  number={10},
  pages={24--36},
  year={2017},
  publisher={IEEE}
}

@article{bilik2019rise,
  title={The rise of radar for autonomous vehicles: Signal processing solutions and future research directions},
  author={Bilik, Igal and Longman, Oren and Villeval, Shahar and Tabrikian, Joseph},
  journal={IEEE signal processing Magazine},
  volume={36},
  number={5},
  pages={20--31},
  year={2019},
  publisher={IEEE}
}

@article{candes2008enhancing,
  title={Enhancing sparsity by reweighted $l_1$ minimization},
  author={Candes, Emmanuel J and Wakin, Michael B and Boyd, Stephen P},
  journal={Journal of Fourier analysis and applications},
  volume={14},
  pages={877--905},
  year={2008},
  publisher={Springer}
}
